\newtheorem{thm}{Theorem}[section]
\newtheorem{kor}[thm]{Corollary}
\newtheorem{lem}[thm]{Lemma}
\newtheorem{prop}[thm]{Proposition}
\newtheorem{exm}[thm]{Example}
\theoremstyle{definition}
\theoremstyle{problem}
\newtheorem{prbm}{Problem}[section]
\theoremstyle{remark}
\newtheorem{rem}{Remark}[section]
\begin{document}
%
% paper title
% can use linebreaks \\ within to get better formatting as desired
\title{Minimax Robust Decentralized Detection in Parallel Sensor Networks}

\author{G\"{o}khan G\"{u}l,~\IEEEmembership{Member,~IEEE}\\
         Technische Universit\"{a}t Darmstadt, 64283, Darmstadt, Germany\\ E-mail: ggul@spg.tu-darmstadt.de}

\maketitle

\begin{abstract}
Minimax robust decentralized detection is studied for parallel sensor networks. Random variables corresponding to sensor observations are assumed to follow a distribution function, which belongs to an uncertainty class. It has been proven that, for some uncertainty classes, if all probability distributions are absolutely continuous with respect to a common measure, the joint stochastic boundedness property, which is the fundamental rule for the derivations in Veerevalli's work, does not hold. This raises a natural question whether minimax robust decentralized detection is possible if the uncertainty classes do not own this property. The answer to this question has been shown to be positive, which leads to a generalization of the work of Veerevalli. Moreover, due to a direct consequence of Tsitsiklis's work, quantization functions at the sensors are not required to be monotone. For the proposed model, some specific examples have been provided and possible generalizations have been discussed.

\end{abstract}
% IEEEtran.cls defaults to using nonbold math in the Abstract.
% This preserves the distinction between vectors and scalars. However,
% if the journal you are submitting to favors bold math in the abstract,
% then you can use LaTeX's standard command \boldmath at the very start
% of the abstract to achieve this. Many IEEE journals frown on math
% in the abstract anyway.

% Note that keywords are not normally used for peerreview papers.
\begin{IEEEkeywords}
Robustness, Decentralized Detection, Data Fusion, Sensor Networks, Minimax Hypothesis Testing.
\end{IEEEkeywords}

% For peer review papers, you can put extra information on the cover
% page as needed:
% \ifCLASSOPTIONpeerreview
% \begin{center} \bfseries EDICS Category: 3-BBND \end{center}
% \fi
%
% For peerreview papers, this IEEEtran command inserts a page break and
% creates the second title. It will be ignored for other modes.
\IEEEpeerreviewmaketitle

%, abbreviated by DD-WF and DD-WoF respectively,

\section{Introduction}
\label{sec:intro}
In simple binary hypothesis testing the design of optimum decision rules requires the exact knowledge of the conditional probability distributions under each hypothesis. However, in practice, complete knowledge of the observation statistics is often not available, such as occurs with the presence of outliers or due to model mismatch. In these cases, a reasonable approach is to represent each hypothesis by a set or class of distributions and determine the optimum decision rule via minimizing the worst case performance. Such tests are called minimax robust tests and they often have the property of guaranteeing a certain level of detection performance irrespective of the actual state of the observation statistics. Because of this property, minimax robust tests are often essential for the design of systems that have to function reliably in harsh environments or in environments which cannot be modeled accurately \cite{Levy}.\\
The first and probably the most fundamental work in robust hypothesis testing was developed by Huber in 1965 \cite{hube65}. He showed that the minimax robust test for the $\epsilon$-contaminated classes of distributions and the uncertainty classes with respect to the total variation distance were clipped likelihood ratio tests, where the likelihood ratio was obtained between so called least favorable distributions from the respective uncertainty classes. In his follow-up work, Huber extended the class of distributions to five, from where the same conclusions could be made \cite{hube68}. The most general classes of distributions for which the clipped likelihood ratio is the minimax robust test are the two alternating capacities, published by Huber and Strassen \cite{hube73}.\\
In addition to robustness, another important aspect is to include multiple decision makers (physical sensors) into the decision making process. In many practical applications, such as radar, wireless communication, or seismology, more than a single sensor is available and it is well known that if the events of interest are independent, the system error probability decreases exponentially with the number of sensors \cite{Chernoff}. Although the benefits of robust distributed detection are obvious, progress made since 1980's has been insignificant \cite{Varshney}. The earliest study in this field was conducted by Geroniotis, who considered a distributed detection network without a fusion center (DDN-WoF) for a fixed sample size and a sequential discrete time robust detection for two sensors \cite{Ger2}. In \cite{Ger3}, Geraniotis and Chau studied the robustness of distributed detection network with a fusion center (DDN-WF) and sequential data fusion where the emphasis was on the selection of robust fusion rules. In their recent work \cite{Ger4}, Geraniotis and Chau generalized most of their results presented in \cite{Ger3}.\\
All Huber's classes of distributions satisfy joint stochastic boundedness property. Based on this observation it was proven in \cite{Poor1} that for jointly stochastically bounded classes of distributions, there exist least favorable distributions for DDN-WF if the individual sensors employ robust tests. Moreover, the authors formalized necessary conditions that need to be satisfied by the cost assignment procedure for DDN-WoF. The results derived in \cite{Poor1} generalize the DDN-WoF-results of Geroniotis \cite{Ger2} to a network of more than two sensors and to more general cost functions. Furthermore, the results of \cite{Poor1} also generalizes the DDN-WF-results of Geraniotis and Chau \cite{Ger3, Ger4} to non-Bayesian formulation, non-binary decisions, non-identical sensor decisions and non-asymptotic case, both in terms of the number of sensors as well as the number of observations. Recent studies in robust decentralized detection consider different network topologies, for instance tandem sensor networks, where asymptotic analysis is of great interest \cite{Tsi3,serial} and application of earlier results to scenarios with constraints such as power \cite{park}, communication rate \cite{gul2}, or local optimality \cite{censoring}.\\
In this paper, a more comprehensive solution to minimax robust decentralized detection problem is provided. The network topology is parallel with a finite number of sensors and a fusion center. Each sensor in the sensor network collects a finite number of samples characterizing either the null or the alternative hypothesis and gives a decision which is possibly multi-level. The proposed scheme includes the work of Veerevalli et. al. \cite{Poor1} as a special case since in our work the two conditions: $1)$ joint stochastic boundedness property and $2)$ monotone sensor quantization functions are not necessarily required. Moreover, generalizations to Neyman-Pearson detection, repeated observations and different network topologies are also discussed.\\
The organization of this paper is as follows. In Section~\ref{motiv_prob}, the motivation and the problem definition is given. In Section~\ref{minimax} the theory behind the solution of minimax robust decentralized detection problem is introduced. In Section~\ref{examples} specific examples are given. In Section~\ref{general} possible generalizations of the theory is discussed, and finally in Section~\ref{conclusion} the paper is concluded.

\section{Motivation and Problem Definition}\label{motiv_prob}
Binary minimax decentralized detection is studied for parallel sensor networks as illustrated in Figure~\ref{fig1}. The hypotheses $\mathcal{H}_0$ and $\mathcal{H}_1$ are associated with the probability measures $P_0$ and $P_1$, which have the density functions $p_0$ and $p_1$, respectively. Here, and in the following sections every probability measure e.g. $P[\cdot]$ will be associated with its distribution function $P(\cdot)$ i.e. $P(y)=P[Y\leq y]$ for the random variable $Y$ and the observation $y$. The detailed structure of the sensor network will be presented after stating the motivation. The following remark, and lemmas will be used in the rest of the paper.

\begin{rem}\label{rem1}
Let $X$ and $Y$ be two random variables defined on the same measurable space $(\Omega,{\mathcal{A}})$, having cumulative distribution functions $P_X$ and $P_Y$, respectively. $X$ is called stochastically larger than $Y$, i.e. $X\succeq Y$, if $P_Y(x)\geq P_X(x)$ for all $x$.
\end{rem}

\begin{lem}\label{cor1}
For every non-decreasing function $\upsilon$, $X\succeq Y\Longleftrightarrow\upsilon(X)\succeq \upsilon(Y)$, hence $X\succeq Y \Longleftrightarrow \mathrm{E}[\upsilon(X)]\geq \mathrm{E}[\upsilon(Y)]$.
\end{lem}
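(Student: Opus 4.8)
The plan is to read $X\succeq Y$ through its definition in Remark~\ref{rem1}, namely $P_X(x)\le P_Y(x)$ for all $x$, equivalently $P[X>x]\ge P[Y>x]$ for all $x$, and to prove the two equivalences in turn, treating the nontrivial forward implications as the real content and disposing of the converses by specializing $\upsilon$.

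For the first equivalence I would fix a non-decreasing $\upsilon$ and a level $t$, and examine the sublevel set $S_t=\{s:\upsilon(s)\le t\}$. Because $\upsilon$ is non-decreasing, $S_t$ is an interval unbounded to the left, so with $a_t=\sup S_t$ it equals either $(-\infty,a_t]$ or $(-\infty,a_t)$ (allowing $a_t=\pm\infty$ and the degenerate empty or full cases). In the closed case $P[\upsilon(X)\le t]=P_X(a_t)\le P_Y(a_t)=P[\upsilon(Y)\le t]$ directly from the hypothesis; in the open case I would pass to the left limit, writing $P[\upsilon(X)\le t]=\lim_{x\uparrow a_t}P_X(x)\le\lim_{x\uparrow a_t}P_Y(x)=P[\upsilon(Y)\le t]$, which is legitimate since $P_X\le P_Y$ holds pointwise. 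This gives $P_{\upsilon(X)}(t)\le P_{\upsilon(Y)}(t)$ for every $t$, i.e. $\upsilon(X)\succeq\upsilon(Y)$. The converse is immediate: quantifying over all non-decreasing $\upsilon$ and taking $\upsilon=\mathrm{id}$ recovers $X\succeq Y$ (and for a strictly increasing $\upsilon$ one may instead apply the forward implication to $\upsilon^{-1}$).

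For the second equivalence I would invoke the layer-cake representation $\mathrm{E}[U]=\int_0^\infty P[U>t]\,dt-\int_{-\infty}^0 P[U\le t]\,dt$, valid whenever the expectation exists. Setting $U=\upsilon(X)$ and $V=\upsilon(Y)$, the relation $\upsilon(X)\succeq\upsilon(Y)$ established above is equivalent to $P[U>t]\ge P[V>t]$ together with $P[U\le t]\le P[V\le t]$ for all $t$; both inequalities push the two layer-cake integrals the same way, yielding $\mathrm{E}[\upsilon(X)]\ge\mathrm{E}[\upsilon(Y)]$ and hence the forward direction. For the converse I would test against the non-decreasing indicators $\upsilon=\mathbf{1}_{(x_0,\infty)}$, for which $\mathrm{E}[\upsilon(X)]=P[X>x_0]$ and $\mathrm{E}[\upsilon(Y)]=P[Y>x_0]$; the assumed inequality then reads $P_X(x_0)\le P_Y(x_0)$ for arbitrary $x_0$, which is exactly $X\succeq Y$.

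The main obstacle I anticipate is purely technical: correctly handling whether the right endpoint of $S_t$ is attained, since an open sublevel set forces the left-limit argument rather than a direct pointwise comparison, and keeping track of the degenerate cases $a_t=\pm\infty$ and $S_t\in\{\emptyset,\mathbb{R}\}$. A secondary point to state explicitly is the integrability (or boundedness) assumption on $\upsilon$ needed for the layer-cake identity to apply, so that the expectation equivalence is meaningful.
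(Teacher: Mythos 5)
Your argument is correct, and it is worth noting that the paper itself gives no proof of this lemma at all --- it simply points to \cite[pp.~4--5]{Wolfstetter96}, so there is no in-text argument to match against. The cited reference establishes the equivalence between first-order stochastic dominance and expected-utility ordering by the standard economics route: either integration by parts, $\mathrm{E}[\upsilon(X)]-\mathrm{E}[\upsilon(Y)]=-\int \upsilon'(x)\bigl(P_X(x)-P_Y(x)\bigr)\,\mathrm{d}x\ge 0$ for smooth $\upsilon$ followed by an approximation step, or the quantile-coupling construction $\tilde X=P_X^{-1}(U)\ge P_Y^{-1}(U)=\tilde Y$ pointwise, from which both conclusions drop out at once. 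Your route is more elementary and more general on the first equivalence: the sublevel-set analysis (left-unbounded interval, closed versus open right endpoint handled by a left limit, degenerate cases $\emptyset$ and $\mathbb{R}$) needs no smoothness, no inverse, and no auxiliary uniform variable, and your layer-cake treatment of the expectation together with the indicator test functions $\mathbf{1}_{(x_0,\infty)}$ for the converse is clean and complete. Two small points you already half-acknowledge and should make explicit: (i) the lemma's literal $\Longleftarrow$ for a \emph{fixed} non-decreasing $\upsilon$ is false (take $\upsilon$ constant), so the converse must be read as quantified over all non-decreasing $\upsilon$ (or restricted to strictly increasing ones), exactly as you treat it; (ii) the expectation statement requires that the relevant expectations exist, which you correctly flag as the hypothesis for the layer-cake identity. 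With those caveats stated, your proof is complete and, unlike the paper, self-contained.
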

Proof of Lemma \ref{cor1} is simple and can be found for example in \cite[pp. 4-5]{Wolfstetter96}.

\begin{lem}\label{lem1}
Let $X_1$, $X_2$, $Y_1$ and $Y_2$ be four random variables on $(\Omega,{\mathcal{A}})$, out of which $X_1$ and $X_2$, and $Y_1$ and $Y_2$ are independent. If $X_1\succeq Y_1$ and $X_2\succeq Y_2$, then $X_1+X_2\succeq Y_1+Y_2$.
\end{lem}

\begin{proof}
From Remark~\ref{rem1}, we have $P_{Y_1}(x)\ge P_{X_1}(x)$ and $P_{Y_2}(x)\ge P_{X_2}(x)$ for all $x$. Hence,
\begin{align}
P_{Y_1+Y_2}(z)&=\int_{-\infty}^{+\infty}P_{Y_1}(z-x)dP_{Y_2}(x)
\ge \int_{-\infty}^{+\infty}P_{X_1}(z-x)dP_{Y_2}(x)\nonumber\\
&=\iint_{x+y\le z}dP_{X_1}(x)dP_{Y_2}(y)
= \int_{-\infty}^{+\infty}P_{Y_2}(z-y)dP_{X_1}(y)\nonumber\\
&\ge \int_{-\infty}^{+\infty}P_{X_2}(z-y)dP_{X_1}(y)=P_{X_1+X_2}(z).
\end{align}
\end{proof}

\subsection{Motivation}
It is stated by Huber \cite{hube73} that if the classes of distributions are constructed such that every distribution in the uncertainty class is absolutely continuous with respect to a dominating measure and the domain of the uncertainty classes are uncountably infinite, the stochastic boundedness property may fail. This property specifies minimax robustness in all Huber's papers \cite{hube65,hube68,hube73} and it is a precondition for the design of minimax robust decentralized detection in \cite{Poor1}. This leads to the following questions:
\begin{enumerate}
  \item Are there classes of distributions for which joint stochastic boundedness property fails?
  \item Is minimax robust decentralized detection possible in this case?
\end{enumerate}
In the sequel, two examples of uncertainty classes are provided, where the stochastic boundedness property holds and fails, respectively. For both examples, every distribution in the uncertainty classes is absolutely continuous with respect to the related nominal measure. The second question will be addressed starting from the next section.
\begin{exm}
Let $P_0=\mathcal{N}(t_l^0,\sigma^2)$ and $P_1=\mathcal{N}(t_u^1,\sigma^2)$ be the nominal distributions and
\begin{align}\label{eq1}
\mathscr{P}_0&=\{Q_0:Q_0=\mathcal{N}(\mu_0,\sigma^2),\,\mu_0\in[t_l^0,t_u^0]\},\nonumber\\
\mathscr{P}_1&=\{Q_1:Q_1=\mathcal{N}(\mu_1,\sigma^2),\,\mu_1\in[t_l^1,t_u^1]\}.
\end{align}
Then, $\hat {Q}_0=\mathcal{N}(t_u^0,\sigma^2)$ and $\hat {Q}_1=\mathcal{N}(t_l^1,\sigma^2)$ are the least favorable distributions satisfying the joint stochastic boundedness property
\begin{align}\label{eq2}
&Q_0[\hat{l}(Y)\leq t]\geq \hat{Q}_0[\hat{l}(Y)\leq t],\,\,\, \forall t\in\mathbb{R}_{\geq 0}, \forall Q_0\in\mathscr{P}_0\\
&Q_1[\hat{l}(Y)\leq t]\leq \hat{Q}_1[\hat{l}(Y)\leq t],\,\,\, \forall t\in\mathbb{R}_{\geq 0}, \forall Q_1\in\mathscr{P}_1.\label{eq3}
\end{align}
where $\hat l=d \hat {Q}_1/d \hat {Q}_0$ is the robust likelihood ratio function.
\end{exm}
\begin{proof}
Since for $$g(\mu_0,t)=\frac{\mu_0}{\sqrt{2}\sigma}-\frac{1}{2\sqrt 2 \sigma}\left(t_u^0+t_l^1+\frac{2\sigma^2 t}{t_l^1-t_u^0}\right)$$
we have
\begin{equation}\label{eq4}
f(\mu_0,t)=Q_0[\log\hat{l}(Y)\leq t]=\frac{1}{\sqrt \pi}\int^{\infty}_{g(\mu_0,t)}e^{-x^2}\mathrm d x
\end{equation}
and
\begin{equation}\label{eq5}
f(\mu_0+\epsilon,t)-f(\mu_0,t)=-\frac{1}{\sqrt \pi}\int^{g(\mu_0+\epsilon,t)}_{g(\mu_0,t)}e^{-x^2}\mathrm d x
\end{equation}
is decreasing in $\epsilon$ for every $t$, \eqref{eq2} holds. The proof for \eqref{eq3} is similar and is omitted.
\end{proof}
\begin{exm}
The second example will be stated with the following proposition.
\begin{prop}\label{prop1}
Let the uncertainty classes be
\begin{equation}\label{eq6}
{\mathscr{P}}_j=\{Q_j:D(Q_j,P_j)\leq \epsilon_j\},\quad j\in\{0,1\},
\end{equation}
where
\begin{equation*}
D(Q_j,P_j)=\int_{\Omega}\ln(d Q_j/d P_j)\mathrm{d} Q_j,\quad j\in\{0,1\}
\end{equation*}
is the KL-divergence. Then, there exists no pair of LFDs ($\hat {Q}_0,\hat {Q}_1$) such that \eqref{eq2} and \eqref{eq3} hold.
\end{prop}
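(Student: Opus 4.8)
The plan is to argue by contradiction, exploiting the fact that the stochastic boundedness condition \eqref{eq2} is, threshold by threshold, an extremal (minimization) requirement on $\hat Q_0$, and then to show that these requirements are mutually incompatible over a Kullback--Leibler ball. Reading \eqref{eq2} directly: for each fixed $t$ the inequality $\hat Q_0[\hat l(Y)\le t]\le Q_0[\hat l(Y)\le t]$ must hold for \emph{every} $Q_0\in\mathscr P_0$, which says precisely that $\hat Q_0$ minimizes the linear functional $Q_0\mapsto Q_0[\hat l(Y)\le t]=\int \mathbf 1_{\{\hat l\le t\}}\,\mathrm dQ_0$ over the ball $\mathscr P_0=\{Q_0:D(Q_0,P_0)\le\epsilon_0\}$. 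Hence a necessary condition for a least favorable pair to exist is that one single distribution $\hat Q_0$ be a simultaneous minimizer of the probability of every lower level set $\{\hat l\le t\}$, $t\ge 0$. The first step is to record this reformulation (and its dual reading of \eqref{eq3} for $\hat Q_1$).

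Second, I would characterize the minimizer of $\int \mathbf 1_{\{\hat l\le t\}}\,\mathrm dQ_0$ over $\mathscr P_0$. Since $D(\cdot,P_0)$ is strictly convex, the ball $\mathscr P_0$ is strictly convex, so a non-constant linear functional attains its minimum at a unique boundary point; because concentrating all mass on $\{\hat l>t\}$ has infinite divergence from $P_0$, the constraint is active and the minimizer lies on $\{D=\epsilon_0\}$. A Lagrangian (exponential-tilting) computation then gives, for the density $q_0^{(t)}$ of the minimizer, $q_0^{(t)}/p_0\propto \exp(-\mathbf 1_{\{\hat l\le t\}}/\lambda_t)$ with $\lambda_t>0$; that is, $q_0^{(t)}/p_0$ is a two-level function, taking a strictly smaller value on $\{\hat l\le t\}$ than on $\{\hat l>t\}$.

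The third step is the contradiction. If $\hat Q_0$ satisfies \eqref{eq2} it must coincide with the unique minimizer $q_0^{(t)}$ for every $t$, so the fixed density ratio $r:=\hat q_0/p_0$ would have to be two-level with its single jump located at the level set $\{\hat l=t\}$ simultaneously for all $t$. Choosing $t_1<t_2$ in the interior of the range of $\hat l$ so that each of $\{\hat l\le t_1\}$, $\{t_1<\hat l\le t_2\}$ and $\{\hat l>t_2\}$ has positive probability, and writing $r=c_i^{\mathrm{lo}}$ on $\{\hat l\le t_i\}$ and $r=c_i^{\mathrm{hi}}$ on $\{\hat l>t_i\}$, the first band forces $c_1^{\mathrm{lo}}=c_2^{\mathrm{lo}}$, the last band forces $c_1^{\mathrm{hi}}=c_2^{\mathrm{hi}}$, while the middle band forces $c_1^{\mathrm{hi}}=c_2^{\mathrm{lo}}$; chaining these gives $c_1^{\mathrm{lo}}=c_1^{\mathrm{hi}}$, contradicting $c_i^{\mathrm{lo}}<c_i^{\mathrm{hi}}$. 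Such $t_1,t_2$ exist because, the nominals being absolutely continuous, $\hat l$ takes a continuum of values; the degenerate case $\hat Q_0=\hat Q_1$, i.e. $\hat l\equiv 1$, is excluded as it corresponds to no discriminating test. Thus \eqref{eq2} cannot hold for any candidate $\hat Q_0$, and a fortiori no pair can satisfy both \eqref{eq2} and \eqref{eq3}.

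The main obstacle is the convex-analytic core of the second step: establishing attainment and \emph{uniqueness} of the minimizer together with its exact exponential-tilt form, i.e. verifying strict convexity of the KL ball, an appropriate constraint qualification, and that the multiplier $\lambda_t$ is finite and strictly positive (equivalently, that the divergence constraint is active). Once the minimizer is pinned down to the two-level tilt, the incompatibility in the third step is elementary. A secondary point to dispatch is the non-degeneracy of $\hat l$ under the standing absolute-continuity assumption, which is what guarantees the three bands above are non-null.
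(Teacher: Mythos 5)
Your route is genuinely different from the paper's. The paper weakens \eqref{eq2}--\eqref{eq3} to a single necessary consequence: applying Lemma~\ref{cor1} with the monotone function $\ln$, stochastic dominance of $\hat l(Y)$ implies that $(\hat Q_0,\hat Q_1)$ must maximize/minimize the expected log-likelihood ratio as in \eqref{eq7}; it then invokes Huber--Strassen to pass to \eqref{eq8}, Dabak's result to pin the candidate pair down to the exponential (geometric-mixture) family \eqref{eq9}, and finally the external fact that the induced test is a nominal likelihood ratio test with a shifted threshold and hence not minimax robust over KL balls. You instead use the \emph{full} strength of \eqref{eq2}: for each threshold $t$ it is a linear minimization over the KL ball, whose active-constraint minimizer is a two-level exponential tilt of $p_0$ split along $\{\hat l\le t\}$, and two distinct thresholds force incompatible two-level structures on the single density $\hat q_0/p_0$. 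Your argument is more self-contained (no Huber--Strassen, no Dabak, no appeal to the randomized minimax test of \cite{gul5}), at the cost of having to do the convex-analytic work (attainment, activity, uniqueness of the tilt) yourself; the paper's proof is shorter but leans on three external results and is correspondingly less explicit about why the geometric-mixture pair fails \eqref{eq2}--\eqref{eq3}.

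Two points in your sketch need genuine attention beyond the convex-analytic bookkeeping you already flag. First, the constraint need not be active for every $t$: if $\epsilon_0\ge -\ln P_0(\hat l>t)$, the minimum of $Q_0[\hat l\le t]$ over the ball is $0$ and is attained by many measures, none of which need be a two-level tilt; so you must choose $t_1<t_2$ large enough that $-\ln P_0(\hat l>t_i)>\epsilon_0$, and then still argue that the three bands $\{\hat l\le t_1\}$, $\{t_1<\hat l\le t_2\}$, $\{\hat l>t_2\}$ all have positive $P_0$-measure. Second, your non-degeneracy step is circular as written: you justify that $\hat l$ takes a continuum of values by absolute continuity of the \emph{nominals}, but $\hat l=d\hat Q_1/d\hat Q_0$ is the likelihood ratio of the unknown candidate pair, which could a priori be (essentially) two-valued, in which case only one nontrivial band structure exists and the chaining argument collapses. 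That case can be excluded, but it requires an extra step: if $\hat q_0/p_0$ and $\hat q_1/p_1$ are each two-level across the single split, then $\hat l=(\hat q_1/p_1)(p_1/p_0)(p_0/\hat q_0)$ being constant on each piece forces the nominal ratio $p_1/p_0$ itself to be two-valued, which contradicts the (implicit) non-degeneracy of the nominal pair. Without that step the proof does not close.
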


\end{exm}
\begin{proof}
The claim can be proven by contradiction. Assume that there exists such a pair of LFDs. Then, the same pair must satisfy
\begin{align}\label{eq7}
\hat{Q}_0&=\arg \max_{Q_0\in\mathscr{P}_0} \mathrm{E}_{Q_0}\ln(d\hat{Q}_1/d\hat{Q}_0),\nonumber\\
\hat{Q}_1&=\arg \min_{Q_1\in\mathscr{P}_1} \mathrm{E}_{Q_1}\ln(d\hat{Q}_1/d\hat{Q}_0)
\end{align}
by applying Remark~\ref{rem1} and Lemma~\ref{cor1} in \eqref{eq2} and \eqref{eq3}. By Huber and Strassen \cite[Theorem 7.1]{hube73}, see also \cite{gulbook}, \eqref{eq7} is equivalent to
\begin{align}\label{eq8}
\hat{Q}_0&=\arg \max_{Q_0\in\mathscr{P}_0} \mathrm{E}_{Q_0}\ln(d\hat{Q}_1/d Q_0),\nonumber\\
\hat{Q}_1&=\arg \min_{Q_1\in\mathscr{P}_1} \mathrm{E}_{Q_1}\ln(d Q_1/d\hat{Q}_0).
\end{align}
By Dabak, \cite{dabak}, see also \cite{Levy}, the pair of distributions solving \eqref{eq8} are given by
\begin{equation}\label{eq9}
\hat {q}_0=\frac{{p_0}^{1-u} {p_1}^u}{\int_{\Omega}{p_0}^{1-u} {p_1}^u},\quad \hat {q}_1=\frac{{p_0}^v{p_1}^{1-v} }{\int_{\Omega}{{p_0}^v p_1}^{1-v} }
\end{equation}
with respect to their density functions, where $u$ and $v$ are parameters to be determined such that
\begin{equation}\label{eq10}
D(\hat{Q}_0,P_0)=\epsilon_0,\quad D(\hat{Q}_1,P_1)=\epsilon_1.
\end{equation}
However, the test based on $\hat l=\hat {q}_1/\hat {q}_0$ is still a nominal likelihood ratio test \cite{dabak, Levy}, though with a modified threshold, and therefore it is not minimax robust \cite{hube81}. Hence, no pair of distributions is jointly stochastically bounded for the KL-divergence neighborhood.
\end{proof}
Notice that a minimax robust test for the KL-divergence exits and the corresponding test is unique and is randomized \cite{gul5}. Since this test is not equivalent to any deterministic likelihood ratio test, it also does not satisfy \eqref{eq2} and \eqref{eq3}.

\subsection{Problem Definition}
\begin{figure}[ttt]
  \centering
  \centerline{\includegraphics[width=70mm]{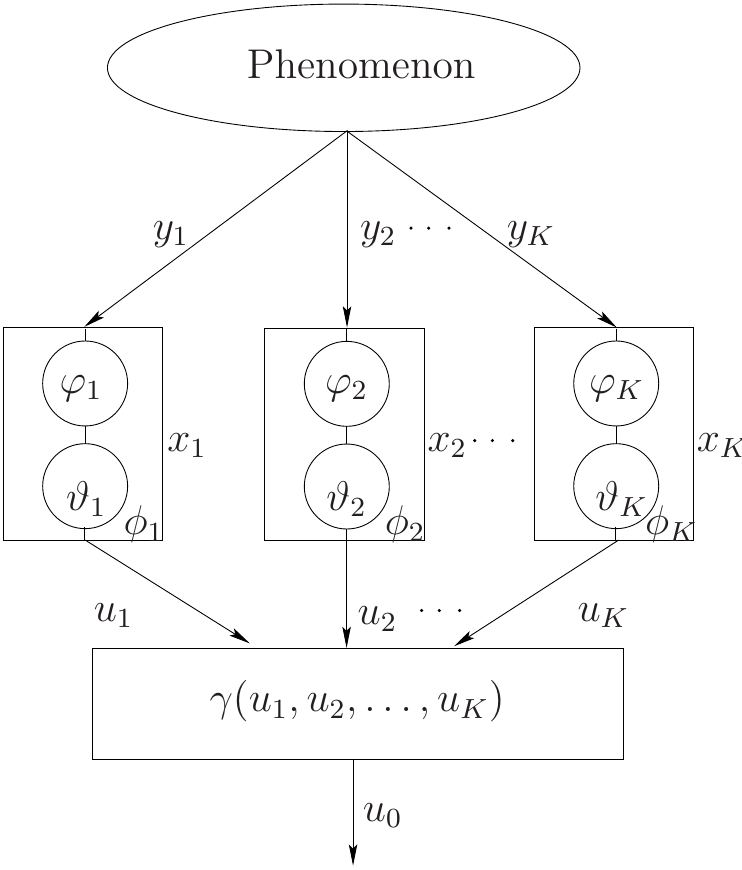}}
\caption{Distributed detection network with $K$ decision makers, each represented by the decision rule $\phi_i$, which is the composition of two functions $\varphi_i$ and $\vartheta_i$, and a fusion center associated with the fusion rule $\gamma$.\label{fig1}}
\end{figure}
Consider a decentralized detection network with a parallel topology as shown in Figure~\ref{fig1}. There are $K$ decision makers observing a certain phenomenon, and a fusion center. All random variables $Y_i$ corresponding to the observations $y_i$ take values on a measurable space $(\Omega_i,{\mathcal{A}}_i)$ and are assumed to be independent under each hypothesis, but not necessarily identical. Every decision maker $\phi_i$ is assumed to be composed of two possibly random functions $\varphi_i:\Omega_i\rightarrow A_i$, where $A_i\in\mathcal{A}_i$ and $\vartheta_i:A_i\rightarrow S_i \subset\mathbb{N}$. Given an observation $y_i$, every sensor transmits its own decision $u_i=\phi_i(y_i)=\vartheta_i(\varphi_i(y_i))$ to the fusion center. The fusion center, i.e. $\gamma$ then makes the final binary decision $u_0$ based on all decisions $u_1,\ldots,u_K$ that are received. The technical details related to the random variables $Y_i$, $X_i$ and $U_i$ corresponding to the observations $y_i$, $x_i$ and $u_i$, respectively, which are shown in Figure~\ref{fig1}, are detailed below:

\begin{itemize}
\item Under each hypothesis $\mathcal{H}_j$, the random variables $Y_i^j$, $X_i^j=\varphi_i(Y_i^j)$ and $U_i^j=\vartheta_i(\varphi(Y_i^j))$ follow the distributions $Q_{j}^{Y_i^j}$, $Q_{j}^{X_i^j}$ and $Q_{j}^{U_i^j}$ which belong to the uncertainty classes, $\mathscr{P}_j^{Y_i^j}$, $\mathscr{P}_j^{X_i^j}$ and $\mathscr{P}_j^{U_i^j}$, respectively. In order to avoid cumbersome notation, the distributions will be denoted by $Q_0^i$ and $Q_1^i$, and the uncertainty classes by $\mathscr{P}_0^{i}$ and $\mathscr{P}_1^{i}$ omitting the random variables in superscripts.

\item Similarly, the distributions $\mathbf{Q}_{0}=(Q_0^1,\ldots,Q_0^K)$ and $\mathbf{Q}_{1}=(Q_1^1,\ldots,Q_1^K)$ belong to the product uncertainty classes $\mathscr{P}_0=\mathscr{P}_0^1\times\ldots\times\mathscr{P}_0^K$ and $\mathscr{P}_1=\mathscr{P}_1^1\times\ldots\times\mathscr{P}_1^K$, respectively.

\item $\mathbf{Y}^j=(Y^j_1,\ldots,Y^j_K)$, $\mathbf{X}^j=(X^j_1,\ldots,X^j_K)$ and $\mathbf{U}^j=(U^j_1,\ldots,U^j_K)$ are the multivariate random variables under the hypothesis $\mathcal{H}_j$, and $\mathbf{Y}$, $\mathbf{X}$ and $\mathbf{U}$ are defined similarly without the index $j$. The vector notation is also applied to the collection of decision rules $\boldsymbol{\phi}=(\phi_1,\ldots,\phi_K)$ where $\boldsymbol{\phi}\in\boldsymbol{\Delta}=\Delta_1\times\ldots\times\Delta_K$.

\item The stochastically larger sign $\succeq$ is extended to vector notation $\boldsymbol{\succeq}$, e.g., $$\mathbf{U}^j\boldsymbol{\succeq}\mathbf{U}^j\Longrightarrow {U^j_i}\succeq{U^j_i},\quad \forall i.$$ and $\hat{(\cdot)}$ indicates the LFDs, e.g. $\hat {U}^j_i$ is the random variable $U_i$ which follows $\hat {Q}_j$.
%The notations $P_{F_0}$, $P_{M_0}$ and $P_{E_0}$ denote the false alarm, miss detection and overall error probability, respectively, resulting from the final decisions $u_0$.\\
\end{itemize}
Moreover, the nominal and the robust likelihood ratio functions for each decision maker $i$ are denoted by $l_i$ and $\hat{l}_i$, respectively. Let the false alarm and miss detection probabilities be defined as $P_F$ and $P_M$. Then, the minimum error probability can be written as
\begin{align}\label{eq11}
P_E(\boldsymbol\phi,\gamma,\mathbf{Q}_0,\mathbf{Q}_1)=&P(\mathcal{H}_0)P_F(\boldsymbol\phi,\gamma,\mathbf{Q}_0)+P(\mathcal{H}_1)P_M(\boldsymbol\phi,\gamma,\mathbf{Q}_1)\nonumber\\
=&P(\mathcal{H}_0)\int \gamma(\boldsymbol\phi(Y))\mathrm{d}\mathbf{Q}_0+(1-P(\mathcal{H}_0))\int (1- \gamma(\boldsymbol\phi(Y)))\mathrm{d}\mathbf{Q}_1
%=&P(\mathcal{H}_0)\int \mathbf{1}_{\{\mathbf{l}(\mathbf{U})>t\}}(\mathbf{U})\mathrm{d}Q_0+(1-P(\mathcal{H}_0))\int \mathbf{1}_{\{\mathbf{l}(\mathbf{U})\leq t\}}(\mathbf{U})\mathrm{d}Q_1
\end{align}%\boldsymbol\phi(Y)
where $t=P(\mathcal{H}_0)/P(\mathcal{H}_1)$ is the threshold. Accordingly, a solution to the following problem is seeked:
\begin{prbm}
The minimax optimisation problem is stated as follows:
\begin{equation}\label{eq12}
\inf_{\boldsymbol\phi\in\boldsymbol\Delta,\gamma}\sup_{(\mathbf{Q}_0, \mathbf{Q}_1)\in {\mathscr{P}}_0\times{\mathscr{P}}_1}P_E(\boldsymbol\phi,\gamma,\mathbf{Q}_0,\mathbf{Q}_1).
\end{equation}
\end{prbm}
\noindent A solution to this problem results in the saddle value inequalities (see \cite{Poor1} for details):
\begin{equation}\label{eq13}
 P_{E}(\hat{\mathbf{Q}}_{0},\hat{\mathbf{Q}}_{1},\boldsymbol{\phi},\gamma)\geq P_{E}(\hat{\mathbf{Q}}_{0},\hat{\mathbf{Q}}_{1},\hat{\boldsymbol{\phi}},\hat{\gamma})\geq P_{E}(\mathbf{Q}_0,\mathbf{Q}_1,\hat{\boldsymbol{\phi}},\hat{\gamma}).
\end{equation}
The left and the right inequalities indicate the minimisation and the maximisation defined in \eqref{eq12}, respectively. The right inequality in \eqref{eq13} also implies
\begin{align}\label{eq14}
&P_{F}(\hat{\mathbf{Q}}_{0},\hat{\boldsymbol{\phi}},\hat{\gamma})\geq P_{F}(\mathbf{Q}_0,\hat{\boldsymbol{\phi}},\hat{\gamma}),\quad \forall  \mathbf{Q}_0\in\mathscr{P}_0\nonumber\\
&P_{M}(\hat{\mathbf{Q}}_{1},\hat{\boldsymbol{\phi}},\hat{\gamma})\geq P_{M}(\mathbf{Q}_1,\hat{\boldsymbol{\phi}},\hat{\gamma}),\quad \forall  \mathbf{Q}_1\in\mathscr{P}_1
\end{align}
since $P_{E}$ is distinct in $\mathbf{Q}_{0}$ and $\mathbf{Q}_{1}$. The converse is also true hence, \eqref{eq14} $\Longleftrightarrow$ \eqref{eq13}, if $\hat{\boldsymbol{\phi}}$ and $\hat\gamma$ jointly minimise $P_E$. The following section details the conditions that need to be satisfied by $\varphi_i$, $\vartheta_i$ and $\gamma$ such that \eqref{eq13} holds, see Figure~\ref{fig1}.

\section{Minimax Robust Decentralized Detection}\label{minimax}
Error minimising decision rules $\hat{\boldsymbol{\phi}}$ and the fusion rule $\hat\gamma$ are known to be the likelihood ratio tests. The conditions that need to be satisfied for \eqref{eq14} to hold are twofold:
 \begin{enumerate}
   \item  Conditions defined on $\mathbf{U}$ and from $\mathbf{U}$ to $U_0$ via the fusion rule $\gamma$.
   \item  Conditions defined from $\mathbf{Y}$ to $\mathbf{U}$ via $\varphi_i$ and $\vartheta_i$ such that the conditions defined in 1) hold.
 \end{enumerate}
The following theorem details 1), whereas the next two theorems suggest two possible solutions for 2).

\begin{thm}\label{theorem1}
The inequalities defined by \eqref{eq14} hold if $\hat{\boldsymbol{\phi}}$ results in
\begin{enumerate}
\item $\hat{\mathbf{U}}^0\boldsymbol{\succeq}\mathbf{U}^0$ and $\mathbf{U}^1\boldsymbol{\succeq}\hat{\mathbf{U}}^1$,
\item $\hat{l}_i=\hat{q}_1^i/\hat{q}_0^i$ is almost everywhere $\mu=\hat{Q}_1^i+\hat{Q}_0^i$ equal to a monotone non-decreasing function for every $i$.
\end{enumerate}
\end{thm}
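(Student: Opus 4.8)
The plan is to reduce both inequalities in \eqref{eq14} to a one-dimensional stochastic ordering of the aggregated log-likelihood ratio at the fusion center, and then to invoke the stochastic-ordering facts of Remark~\ref{rem1}, Lemma~\ref{cor1} and Lemma~\ref{lem1}. First I would record that, since the sensor observations are independent under each hypothesis, the error-minimising fusion rule $\hat\gamma$ is the likelihood ratio test on the decision vector $\mathbf{U}$, whose likelihood ratio factorises as $\prod_{i=1}^{K}\hat{l}_i$. Taking logarithms, $\hat\gamma$ is therefore, up to a fixed boundary randomisation, the indicator of $\{T>\tau\}$, where $T(\mathbf{u})=\sum_{i=1}^{K}\log\hat{l}_i(u_i)$ and the constants $\tau$ and the randomisation probability $p$ are fixed by $\hat{\mathbf{Q}}_0,\hat{\mathbf{Q}}_1$ and the prior ratio $t$. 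Condition 2 enters precisely here: because each $\hat{l}_i$ is a.e. equal to a non-decreasing function, each summand $\log\hat{l}_i$ is non-decreasing, so $T$ is coordinatewise non-decreasing.

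Next I would transport the sensor-level stochastic dominance of Condition 1 through $T$. From $\hat{\mathbf{U}}^0\boldsymbol\succeq\mathbf{U}^0$ and Lemma~\ref{cor1} applied with the non-decreasing map $\log\hat{l}_i$, one obtains $\log\hat{l}_i(\hat{U}_i^0)\succeq\log\hat{l}_i(U_i^0)$ for every $i$. The summands are independent across $i$ (the $U_i$ are functions of the independent $Y_i$), so an induction on $K$ built on Lemma~\ref{lem1} yields $T(\hat{\mathbf{U}}^0)\succeq T(\mathbf{U}^0)$; at the $k$-th step Lemma~\ref{lem1} is applied to the independent pairs $\bigl(S_{k-1},\log\hat{l}_k(\hat{U}_k^0)\bigr)$ and $\bigl(T_{k-1},\log\hat{l}_k(U_k^0)\bigr)$, whose partial sums inherit independence from the product structure.

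Then I would translate this dominance into the false alarm inequality. Writing $P_F(\mathbf{Q}_0,\hat{\boldsymbol\phi},\hat\gamma)=(1-p)\,\Pr[T(\mathbf{U}^0)>\tau]+p\,\Pr[T(\mathbf{U}^0)\ge\tau]$, the equivalence $A\succeq B\Leftrightarrow P_B\ge P_A$ from Remark~\ref{rem1} gives both $\Pr[T(\hat{\mathbf{U}}^0)>\tau]\ge\Pr[T(\mathbf{U}^0)>\tau]$ and $\Pr[T(\hat{\mathbf{U}}^0)\ge\tau]\ge\Pr[T(\mathbf{U}^0)\ge\tau]$; since $p,1-p\ge 0$ these combine to $P_F(\hat{\mathbf{Q}}_0,\hat{\boldsymbol\phi},\hat\gamma)\ge P_F(\mathbf{Q}_0,\hat{\boldsymbol\phi},\hat\gamma)$, the first line of \eqref{eq14}. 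The miss-detection inequality follows by the mirror-image argument: now $\mathbf{U}^1\boldsymbol\succeq\hat{\mathbf{U}}^1$ gives $T(\mathbf{U}^1)\succeq T(\hat{\mathbf{U}}^1)$, and because $P_M$ involves $1-\hat\gamma$, i.e.\ the complementary event $\{T\le\tau\}$, the ordering flips in the correct direction to yield $P_M(\hat{\mathbf{Q}}_1,\hat{\boldsymbol\phi},\hat\gamma)\ge P_M(\mathbf{Q}_1,\hat{\boldsymbol\phi},\hat\gamma)$.

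I expect the main obstacle to be the bookkeeping around the fusion rule rather than the dominance algebra: one must check that $\hat\gamma$ genuinely collapses to a single scalar threshold on $T$ with the \emph{same} $\tau,p$ for every competing $\mathbf{Q}_j$, so that Condition 2 is what makes the acceptance region an upper set, and one must handle the boundary randomisation so the ordering survives the atoms of $T$ that are unavoidable since the $U_i$ are discrete. The independence-preservation required for the iterated use of Lemma~\ref{lem1} is the remaining delicate point, but it follows directly from the conditional independence of the sensor decisions.
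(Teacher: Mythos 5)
Your proposal is correct and follows essentially the same route as the paper's proof: factorise the fusion likelihood ratio into $\sum_i\log\hat{l}_i(U_i)$, push the componentwise dominance of Condition~1 through the non-decreasing maps $\log\hat{l}_i$ via Lemma~\ref{cor1}, combine the independent summands inductively with Lemma~\ref{lem1}, and convert the resulting stochastic ordering into the $P_F$ and $P_M$ inequalities via Remark~\ref{rem1}. Your additional bookkeeping for the boundary randomisation at the fusion threshold is a refinement the paper's proof omits, but it does not change the argument.
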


\begin{IEEEproof}\label{proofoftheorem1}
Since $U_1,\ldots,U_K$ are all mutually independent random variables, the optimum fusion rule $\hat\gamma$ at the fusion center is to make a decision based on
\begin{equation*}
\hat{l}(\mathbf{U})=\frac{\hat q_1(\mathbf{U})}{\hat q_0(\mathbf{U})}=\prod_{i=1}^K\frac{\hat q_1^i({U}_i)}{\hat q_0^i({U}_i)}\stackrel{\mathcal{H}_1}{\underset{\mathcal{H}_0}{\gtrless}}t
\end{equation*}
which is equivalent to
\begin{equation}\label{eq145}
\log\hat{l}(\mathbf{U})=\log\prod_{i=1}^K\frac{\hat q_1^i({U}_i)}{\hat q_0^i({U}_i)}=\sum_{i=1}^K \log\hat{l}_i(U_i)\stackrel{\mathcal{H}_1}{\underset{\mathcal{H}_0}{\gtrless}}t.
\end{equation}
From condition 2), recall that $\hat{l}_i$ is monotone non-decreasing, $\log \hat{l}_i$ is also monotone non-decreasing for all $i$. Using Lemma~\ref{cor1} in condition 1) with $\upsilon=\log \hat{l}_i$, all summands in \eqref{eq145} satisfy
\begin{align}\label{eq146}
\log \hat{l}_i(\hat{U}_{i}^0)\succeq\log \hat{l}_i\left(U_i^0\right),\quad \forall i, Q_0^i\in\mathscr{P}_0^i,\nonumber\\
\log \hat{l}_i\left(U_i^1\right)\succeq \log \hat{l}_i(\hat{U}_i^1),\quad \forall i, Q_1^i\in\mathscr{P}_1^i.
\end{align}
Accordingly, by applying Lemma~\ref{lem1} to both inequalities in \eqref{eq146} inductively, i.e. to the pairs of random variables iteratively, leads to
\begin{align}\label{eq15}
\sum_{i=1}^K\log \hat{l}_i(\hat{U}_i^0)\succeq \sum_{i=1}^K\log \hat{l}_i\left(U_i^0\right),\quad \forall Q_0^i\in\mathscr{P}_0^i,\nonumber\\
\sum_{i=1}^K\log \hat{l}_i\left(U_i^1\right)\succeq \sum_{i=1}^K \log \hat{l}_i(\hat{U}_i^1),\quad \forall Q_1^i\in\mathscr{P}_1^i.
\end{align}
Let $\hat{Q}_j$ and $Q_j$ be the probability distributions of the random variable $\sum_{i=1}^K\log \hat{l}_i\left(U_{i}\right)$, when $U_i$ is distributed as $\hat{Q}_j^i$ and $Q_j^i$, respectively. Then, the stochastic ordering stated by \eqref{eq15}, cf. Remark~\ref{rem1}, leads to
\begin{align}\label{eq16}
\hat{Q}_0\left[\sum_{i=1}^K\log \hat{l}_i\left(U_i\right)>t\right]\geq {Q}_0\left[\sum_{i=1}^K\log \hat{l}_i\left(U_i\right)>t\right],\quad \forall t, Q_0,\nonumber\\
\hat{Q}_1\left[\sum_{i=1}^K\log \hat{l}_i\left(U_i\right)\leq t\right]\geq Q_1\left[\sum_{i=1}^K\log \hat{l}_i\left(U_{i}\right)\leq t\right] ,\quad \forall t, Q_1.
\end{align}
The inequalities in \eqref{eq16} imply the assertion, hence, the proof is complete.
\end{IEEEproof}

The sufficient conditions amongst the random variables $U_1,\ldots,U_K$ as well as from $\mathbf U$ to $U_0$ have been established with Theorem~\ref{theorem1}. Next, the sufficient conditions from $\mathbf Y$ to $\mathbf U$ will be stated with a suitable choice of the decision rules $\hat{\boldsymbol{\phi}}$, i.e. with $\varphi_i$ and $\vartheta_i$.

\begin{thm}\label{theorem2}
If the function $\varphi_i: \Omega_i\rightarrow A_i$ with the mapping $Y_i\mapsto\hat{l}(Y_i)$ results in
\begin{equation}\label{eq17}
\hat{X}_i^0\succeq X_i^0\quad\mbox{and}\quad X_i^1\succeq \hat{X}_i^1,\quad\forall i, Q_j^i\in\mathscr{P}_j^i
\end{equation}
and if $\vartheta_i$ is a monotone non-decreasing function,
\begin{equation}\label{eq18}
U_i=\vartheta_i(X_i)= \begin{cases} 0, & X_i<t_0^i \\ d, & t_{d-1}^i\leq X_i<t_{d}^i\\ D_{i} &  X_i>t_{D_{i}-1} \end{cases},\quad \forall d\in S_i=\{1,\ldots,D_i-1\}
\end{equation}
then, the two conditions described in Theorem~\ref{theorem1} hold and therefore all conclusions therein follow.
\end{thm}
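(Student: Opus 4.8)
The plan is to verify that the two hypotheses of the theorem (the stochastic ordering \eqref{eq17} on the $X_i$ together with the monotonicity of $\vartheta_i$) imply the two conditions of Theorem~\ref{theorem1}, namely the stochastic ordering of the quantized variables $\mathbf{U}^j$ and the monotonicity of $\hat{l}_i$ as a function on $A_i$. Once both conditions of Theorem~\ref{theorem1} are established, all its conclusions, in particular the saddle value inequalities \eqref{eq14}, follow immediately, so no further work at the fusion center is needed.

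First I would address condition~1) of Theorem~\ref{theorem1}. Starting from the assumed ordering $\hat{X}_i^0\succeq X_i^0$ and $X_i^1\succeq\hat{X}_i^1$ in \eqref{eq17}, I would apply Lemma~\ref{cor1} with the non-decreasing function $\upsilon=\vartheta_i$ given explicitly in \eqref{eq18}. Since $\vartheta_i$ is monotone non-decreasing by hypothesis, Lemma~\ref{cor1} gives $\vartheta_i(\hat{X}_i^0)\succeq\vartheta_i(X_i^0)$ and $\vartheta_i(X_i^1)\succeq\vartheta_i(\hat{X}_i^1)$ for every $i$. Recalling that $U_i=\vartheta_i(X_i)$, this is precisely $\hat{U}_i^0\succeq U_i^0$ and $U_i^1\succeq\hat{U}_i^1$ for all $i$, which in the vector notation established in the problem definition reads $\hat{\mathbf{U}}^0\boldsymbol{\succeq}\mathbf{U}^0$ and $\mathbf{U}^1\boldsymbol{\succeq}\hat{\mathbf{U}}^1$. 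This is exactly condition~1) of Theorem~\ref{theorem1}.

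Next I would establish condition~2), the monotonicity of the robust likelihood ratio $\hat{l}_i$ at the sensor output. Here the key observation is that the map $\varphi_i$ is chosen to be $Y_i\mapsto\hat{l}(Y_i)$, so the intermediate variable $X_i$ is itself the value of the robust likelihood ratio. Consequently, the likelihood ratio $\hat{l}_i$ regarded as a function of the coordinate $X_i$ on $A_i$ is simply the identity-type map $X_i\mapsto X_i$, which is trivially monotone non-decreasing; the composition with the threshold quantizer $\vartheta_i$ of \eqref{eq18} then preserves this monotonicity on the discrete output alphabet $S_i$. Thus the robust likelihood ratio on the $U_i$ is almost everywhere equal to a monotone non-decreasing function, giving condition~2).

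The main obstacle is the second step rather than the first. The reduction of the stochastic ordering via Lemma~\ref{cor1} is routine once monotonicity of $\vartheta_i$ is in hand, but justifying that the choice $\varphi_i:Y_i\mapsto\hat{l}(Y_i)$ indeed renders $\hat{l}_i$ monotone requires care about what ``likelihood ratio'' means after the sufficient-statistic reduction: one must argue that passing to the likelihood-ratio statistic does not destroy the ordering of densities, i.e.\ that the induced densities $\hat{q}_0^i,\hat{q}_1^i$ on $A_i$ have a ratio that coincides almost everywhere $\mu$ with the coordinate itself. I expect this to hinge on the fact that $\hat{l}(Y_i)$ is a sufficient statistic for the pair $(\hat{Q}_0^i,\hat{Q}_1^i)$, so that the likelihood ratio computed on $A_i$ equals the original robust likelihood ratio value, making the monotonicity automatic; verifying this sufficiency and the attendant almost-everywhere equality is where the technical attention should be focused.
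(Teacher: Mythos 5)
Your treatment of condition~1) of Theorem~\ref{theorem1} is exactly the paper's: apply Lemma~\ref{cor1} with $\upsilon=\vartheta_i$ to the ordering \eqref{eq17} and read off $\hat{U}_i^0\succeq U_i^0$, $U_i^1\succeq\hat{U}_i^1$. That part is complete.

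The gap is in condition~2). You correctly identify the key ingredient --- that because $\varphi_i$ is the robust likelihood ratio statistic, the induced likelihood ratio on $A_i$ is a.e.\ the coordinate itself, i.e.\ $\mathrm{d}\hat{Q}_1^i = X_i\,\mathrm{d}\hat{Q}_0^i$ on $A_i$ --- but you then assert that ``the composition with the threshold quantizer $\vartheta_i$ preserves this monotonicity on the discrete output alphabet,'' and you defer the actual verification as ``where the technical attention should be focused.'' That deferred step is the entire content of the paper's proof of condition~2), and it is not a composition argument: $\hat{l}_i(U_i=d)$ is not $\vartheta_i$ applied to anything, it is the ratio of bin masses $\hat{Q}_1^i[t_{d-1}\leq X_i<t_d]/\hat{Q}_0^i[t_{d-1}\leq X_i<t_d]$. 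The paper closes the gap with one computation: using $\mathrm{d}\hat{Q}_1^i = X_i\,\mathrm{d}\hat{Q}_0^i$,
\begin{equation*}
\hat{l}_i(U_i=d)=\frac{1}{\hat{Q}_0^i[t_{d-1}\leq X_i<t_d]}\int_{\{t_{d-1}\leq X_i<t_d\}}X_i\,\mathrm{d}\hat{Q}_0^i=\mathrm{E}_{\hat{Q}_0^i}\left[X_i\,\middle|\,t_{d-1}\leq X_i<t_d\right],
\end{equation*}
which lies in $[t_{d-1},t_d]$; since the bins are ordered, the sequence $d\mapsto\hat{l}_i(U_i=d)$ is non-decreasing (with the obvious adjustment for the two end intervals). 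Without this conditional-expectation sandwich your argument does not establish condition~2); with it, your outline becomes the paper's proof. So the route is the same, but the decisive step is missing rather than merely sketched.
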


\begin{proof}\label{proofoftheorem2}
The mapping $\vartheta_i$ is monotone non-decreasing and from Lemma~\ref{cor1}, it follows that
\begin{equation*}
\hat{U}_i^0\succeq U_i^0\quad\mbox{and}\quad U_i^1\succeq \hat{U}_i^1,\quad \forall i, Q_j^i\in\mathscr{P}_j^i.
\end{equation*}
The function $\hat{l}_{i}=\hat{q}_1^i/\hat{q}_0^i$ is a.e. equal to a monotone non-decreasing function for all $i$ as
\begin{align*}
\hat{l}_{i}(U_{i}=d)&=\frac{\hat{Q}_1^i[t_{d-1}\leq X_i<t_d]}{\hat{Q}_0^i[t_{d-1}\leq X_i<t_d]}\leq \frac{\hat{Q}_1^i[t_d\leq X_i<t_{d+1}]}{\hat{Q}_0^i[t_d\leq X_i<t_{d+1}]}\nonumber\\
&=\hat{l}_i(U_{i}=d+1)
\end{align*}
holds for all $d$, since
\begin{align*}
\frac{\hat{Q}_1^i[t_{d-1}\leq X_i<t_d]}{\hat{Q}_0^i[t_{d-1}\leq X_i<t_d]}&=\frac{1}{\hat{Q}_0^i[t_{d-1}\leq X_i<t_d]}\int_{\{t_{d-1}\leq X_i<t_d\}}\mathrm{d}\hat{Q}_1^i\nonumber\\
&=\frac{1}{\hat{Q}_0^i[t_{d-1}\leq X_i<t_d]}\int_{\{t_{d-1}\leq X_i<t_d\}}X_i\mathrm{d}\hat{Q}_0^i\nonumber\\
&=\mathrm{E}_{\hat{Q}_0^i}[X_i|t_{d-1}\leq X_i <t_d]
\end{align*}
is a number between $t_{d-1}$ and $t_{d}$. Obviously, the result also applies to the end points, i.e. $\hat{l}_{i}(U_i=0)$ and $\hat{l}_{i}(U_i=D_i)$, considering the intervals $(0,t_0^i)$ and $(t_{D_{i}-1},\infty)$, respectively.
\end{proof}
\noindent The results of Theorem~\ref{theorem2} can be extended to include non-monotone $\vartheta_i$ in case a well defined permutation function is applied at the fusion center. This is stated with the following corollary.

\begin{kor}\label{corollary}
Let $\vartheta_i$ be any bijective mapping from the set of non-overlapping intervals of $A_i$ to the set $S_i$. Then, there exists a permutation mapping $\varrho_i$ at the fusion center such that the two conditions described in Theorem~\ref{theorem1} hold and all conclusions therein follow.
\end{kor}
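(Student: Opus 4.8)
The plan is to reduce the non-monotone case to Theorem~\ref{theorem2} by undoing the relabeling at the fusion center. The guiding observation, due to Tsitsiklis, is that only the \emph{partition} of $A_i$ into cells — not the symbols assigned to those cells — affects detection performance, since the likelihood ratio $\hat l_i$ carried by a received symbol is an intrinsic property of the cell into which $X_i$ falls, independent of its name.

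First I would keep the same map $\varphi_i:Y_i\mapsto\hat l(Y_i)$ of Theorem~\ref{theorem2}, so that \eqref{eq17} continues to hold, and fix the canonical monotone quantizer $\vartheta_i^{\uparrow}$ that labels the ordered non-overlapping intervals of $A_i$ in increasing order. Since $\vartheta_i$ and $\vartheta_i^{\uparrow}$ are two bijections from the \emph{same} ordered collection of intervals onto $S_i$, they differ only by a relabeling of the label set: $\vartheta_i=\pi_i\circ\vartheta_i^{\uparrow}$ with $\pi_i:=\vartheta_i\circ(\vartheta_i^{\uparrow})^{-1}$ a permutation of $S_i$. I would then define the fusion-center map as $\varrho_i:=\pi_i^{-1}$, so that $\varrho_i(U_i)=\varrho_i(\vartheta_i(X_i))=\vartheta_i^{\uparrow}(X_i)$ recovers exactly the monotone-quantizer output.

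Next I would argue that this relabeling is transparent to the minimax problem. As $\varrho_i$ is a deterministic bijection of the received symbol $u_i$, it preserves all information, and the per-sensor rule effectively seen by the fusion center becomes $\varrho_i\circ\phi_i=\vartheta_i^{\uparrow}\circ\varphi_i$, i.e. precisely the decision rule analysed in Theorem~\ref{theorem2}. Consequently the fusion likelihood ratio $\hat l(\mathbf U)=\prod_i\hat l_i(U_i)$ and the statistic $\sum_i\log\hat l_i(U_i)$ appearing in \eqref{eq145} are the same random variables whether written through $U_i$ or through $\varrho_i(U_i)$, so the optimum fusion rule $\hat\gamma$ is left unchanged. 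Applying Theorem~\ref{theorem2} to $\vartheta_i^{\uparrow}$ and the relabeled variables $\varrho_i(U_i)$ then delivers both conditions of Theorem~\ref{theorem1}, and all conclusions therein follow.

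The step requiring the most care — and the main obstacle — is the invariance claim of the preceding paragraph: one must verify that relocating the permutation $\varrho_i$ from the sensor to the fusion center changes neither the feasible set of decision rules nor the value of the minimax problem \eqref{eq12}. This is exactly where the Tsitsiklis principle is invoked, namely that the names of the quantization cells are immaterial and only the likelihood-ratio value attached to each cell governs the optimal fusion test, so that the worst-case error is identical before and after relabeling.
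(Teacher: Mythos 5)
Your proposal is correct and follows essentially the same route as the paper: the paper's (much terser) proof likewise notes that bijectivity forces the interval count to equal $|S_i|$ and then lets the fusion center apply a permutation $\varrho_i$ that renders the composite rule equivalent to a monotone $\vartheta_i$, reducing to Theorem~\ref{theorem2}. Your explicit construction $\varrho_i=\pi_i^{-1}$ with $\pi_i=\vartheta_i\circ(\vartheta_i^{\uparrow})^{-1}$, and the remark that the relabeling leaves the statistic $\sum_i\log\hat l_i(U_i)$ and hence the error probabilities unchanged, simply fills in details the paper leaves implicit.
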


\begin{proof}\label{proofofcorollary}
Since $\vartheta_i$ is a bijective mapping, the total number of intervals of $A_i$ must have the same cardinality with the cardinality of $S_i$. Then, for every decision maker $i$, the fusion center employs a permutation mapping $\varrho_i$ such that the fusion rule is equivalent to a monotone $\vartheta_i$ together with a regular likelihood ratio test at the fusion center. Hence, Theorem~\ref{theorem2} and accordingly Theorem~\ref{theorem1} follow.
\end{proof}
\noindent The task of fusion center is to employ an overall permutation mapping $\boldsymbol\varrho=\{\varrho_1,\ldots,\varrho_K\}$ to the received discrete multilevel decisions $u_1,\ldots,u_K$. The mapping described by $\varrho_i$ is well known and can be found in \cite[p. 310]{Tsi}. Notice that fusion center must know which decision corresponds to which decision maker to be able to perform this task.\\
The second possible design of $\boldsymbol{\phi}$ can be achieved through choosing $\varphi_i$ as a trivial function and $\vartheta_i$ as a random function. The following theorem details this claim.
\begin{thm}\label{theorem3}
Let $\varphi_i$ be an identity mapping $Y_i\mapsto X_i$ and let the function ${\vartheta_i:\Omega_i\rightarrow\{0,1\}}$ with the random mapping $\vartheta_i:X_i\mapsto U_i$ results in
\begin{equation}\label{eq18}
\hat{U}_i^0\succeq U_i^0\quad\mbox{and}\quad U_i^1\succeq \hat{U}_i^1\quad \forall i, Q_j^i\in\mathscr{P}_j^i
\end{equation}
which satisfies $\hat{q}_1^i(U_i=0)+\hat{q}_0^i(U_i=1)<1$. Then, all conclusions of Theorem~\ref{theorem1} follow.
\end{thm}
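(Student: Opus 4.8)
The plan is to verify the two hypotheses of Theorem~\ref{theorem1}, after which all of its conclusions transfer verbatim. The first condition of Theorem~\ref{theorem1}, namely $\hat{\mathbf{U}}^0\boldsymbol{\succeq}\mathbf{U}^0$ and $\mathbf{U}^1\boldsymbol{\succeq}\hat{\mathbf{U}}^1$, is exactly the componentwise stochastic ordering $\hat{U}_i^0\succeq U_i^0$ and $U_i^1\succeq\hat{U}_i^1$ assumed outright in the statement, so nothing is to be done there. Hence the only real work is the second condition: that $\hat{l}_i=\hat{q}_1^i/\hat{q}_0^i$ agrees almost everywhere with a monotone non-decreasing function for every $i$.

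First I would record that, since $\varphi_i$ is the identity and $\vartheta_i$ takes values in $\{0,1\}$, the decision $U_i$ lives on the two-point space $\{0,1\}$, where $\hat{l}_i$ attains only the two values $\hat{l}_i(0)=\hat{q}_1^i(0)/\hat{q}_0^i(0)$ and $\hat{l}_i(1)=\hat{q}_1^i(1)/\hat{q}_0^i(1)$. On a two-point space, monotonicity with respect to the natural order $0<1$ is simply the single inequality $\hat{l}_i(0)\le\hat{l}_i(1)$, so the claim reduces to verifying this one inequality.

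The remaining step is an elementary algebraic manipulation. Writing $a=\hat{q}_1^i(0)$ and $b=\hat{q}_0^i(0)$, so that $\hat{q}_1^i(1)=1-a$ and $\hat{q}_0^i(1)=1-b$ because each of $\hat{Q}_1^i$ and $\hat{Q}_0^i$ is a probability distribution on $\{0,1\}$, the inequality $\hat{l}_i(0)\le\hat{l}_i(1)$ becomes $a/b\le(1-a)/(1-b)$ in the generic case $0<b<1$. Cross-multiplying by the positive denominators and cancelling the common $ab$ term collapses it to $a\le b$, i.e. to $\hat{q}_1^i(0)+\hat{q}_0^i(1)\le 1$, which is precisely the non-strict form of the assumed condition; the strict hypothesis $\hat{q}_1^i(0)+\hat{q}_0^i(1)<1$ then even yields the strictly increasing ratio $\hat{l}_i(0)<\hat{l}_i(1)$. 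This same condition forces $b>a\ge 0$ and $1-a>0$, so $\hat{l}_i(0)=a/b$ is finite and well defined, while in the boundary case $b=1$ one has $\hat{l}_i(1)=+\infty>\hat{l}_i(0)$, so monotonicity persists and the ``almost everywhere'' qualifier is vacuous. With both conditions of Theorem~\ref{theorem1} in force, all of its conclusions follow. I expect no genuine obstacle; the only point needing care is the passage from the additive threshold condition on the probabilities to the multiplicative monotonicity of the two-level likelihood ratio.
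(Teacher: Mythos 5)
Your proposal is correct and follows essentially the same route as the paper: condition 1) of Theorem~\ref{theorem1} is taken directly from the hypothesis, and condition 2) is reduced to the single inequality $\hat{l}_i(U_i=0)<\hat{l}_i(U_i=1)$ on the binary alphabet, which both you and the paper extract from $\hat{q}_1^i(U_i=0)+\hat{q}_0^i(U_i=1)<1$ by an elementary cross-multiplication. Your explicit treatment of the degenerate case $\hat{q}_0^i(U_i=1)=0$ is a small addition the paper omits, but it does not change the argument.
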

\begin{proof}
It is assumed by \eqref{eq18} that $\vartheta_i$ satisfies stochastic ordering condition imposed on $U_i$. What remains to be shown is that $\hat{l}_i$ is a.e. equal to a non-decreasing function. This condition is true because
\begin{equation*}
\hat{q}_1^i(U_i=0)<1-\hat{q}_0^i(U_i=1)\quad\mbox{and}\quad \hat{q}_0^i(U_i=1)<1-\hat{q}_{1,U_i}(U_i=0),\quad\forall i
\end{equation*}
implies
\begin{equation*}
\hat{q}_0^i(U_i=1)\hat{q}_1^i(U_i=0)<(1-\hat{q}_0^i(U_i=1))(1-\hat{q}_1^i(U_i=0)),\quad\forall i
\end{equation*}
which is
\begin{equation*}
\hat{l}_i(U_i=1)=\frac{1-\hat{q}_1^i(U_i=0)}{\hat{q}_0^i(U_i=1)}>\frac{\hat{q}_1^i(U_i=0)}{1-\hat{q}_0^i(U_i=1)}=\hat{l}_i(U_i=0),\quad\forall i.
\end{equation*}
\end{proof}
\noindent Both Theorem~\ref{theorem2} and Theorem~\ref{theorem3} imply Theorem~\ref{theorem1}. From Theorem~\ref{theorem1} to the inequalities given by \eqref{eq13}, what remains to be shown is that among all possible $\boldsymbol\phi\in\boldsymbol\Delta$, $\hat{\boldsymbol\phi}$ minimizes the overall error probability $P_E$. The problem definition is generic and depending on the choice of uncertainty classes $\mathscr{P}_0^i$ and $\mathscr{P}_1^i$ the decision and fusion rules, $\hat{\boldsymbol{\phi}}$ and $\hat{\gamma}$, may vary.

\section{Examples}\label{examples}
As mentioned in the previous section, the choice of the uncertainty classes may lead to different types of minimax robust test. In this section, three different uncertainty classes are introduced, one of which makes use of Theorem~\ref{theorem2} and the other two Theorem~\ref{theorem3} such that together with $\hat{\boldsymbol{\phi}}$, which minimises $P_{E}$, imply the saddle value inequalities given by \eqref{eq13}.

\subsection{Huber's Extended Uncertainty Classes}
Let us assume that $\mathscr{P}_0^i$ and $\mathscr{P}_1^i$ are given by Huber's extended uncertainty classes, cf. \cite{hube68}, \cite[p. 271]{hube81}, which include various uncertainty classes as special cases such as $\epsilon-$contamination, total variation, Prohorov, Kolmogorov and Levy neighborhoods. For these uncertainty classes the stochastic ordering defined by \eqref{eq14} hold letting $\varphi_i$ to be the robust likelihood ratio functions $\hat{l}_i$ obtained from the related uncertainty classes. Furthermore, if $\vartheta_i$s are monotone non-decreasing functions, or just bijective mappings, see Corollary~\ref{corollary}, Theorem~\ref{theorem2} follows. If additionally $Y_i$ are mutually independent, the optimum mappings $\vartheta_i$ which minimize $P_E$ are known to be the likelihood ratio tests \cite{Tsi}. Hence, Theorem~\ref{theorem2} and the saddle value condition \eqref{eq13} follow. This result was obtained previously by \cite{Poor1} under the assumption that the uncertainty classes satisfy joint stochastic boundedness property.

%However, the proof presented here is hierarchical, step-wise and self contained.
%The robust decision and fusion rules are both likelihood ratio tests and they minimise the error probability $P_E$.
\subsection{Uncertainty Classes Based on KL-divergence}
The KL-divergence is a smooth distance and hence can be used to design minimax robust tests if the uncertainties are caused by modeling errors or model mismatch, cf. Proposition~\ref{prop1}, \cite{levy09}. The general version of the minimax robust test based on the KL-divergence distance, which is called the (m)-test accepts user defined pair of robustness parameters $(\epsilon_0,\epsilon_1)$ and the pair of nominal distributions $(P_0,P_1)$ and gives a unique pair of least favorable density functions $(\hat{q}_0,\hat{q}_1)$ and a randomized robust decision rule $\hat\phi$ \cite{gul5}. The robustness parameters should be chosen so that the hypotheses do not overlap, i.e. a minimax robust test exists. Existence of a minimax robust test implies $\hat{q}_1^i(U_i=0)+\hat{q}_0^i(U_i=1)<1$ for every decision maker $i$. Moreover, the existence of a saddle value condition stated by \cite{gul5} implies stochastic ordering of $U_i$, i.e. \eqref{eq18}. Hence, by Theorem~\ref{theorem3}, Theorem~\ref{theorem1} follows. Unlike Huber's minimax robust test, for the (m)-test the decision and fusion rules cannot be jointly minimised since $\hat\phi$ is unique and minimises the error probability of every decision maker $P_{E_i}$, not the global error probability $P_E$. Minimizing $P_{E_i}$ for every decision maker does not guarantee that $P_E$ is also minimised. However, there are  special cases, for which $P_E$ is also minimised by $\boldsymbol\hat\phi$. Assume that $\epsilon_0=\epsilon_0^i$, $\epsilon_1=\epsilon_1^i$ and $P_0=P_0^i$, $P_1=P_1^i$. Then, $\boldsymbol\hat\phi$ will be composed of identical decision rules. For identical decision rules, there are also counterexamples showing that no fusion rule $\hat\gamma$ is a minimiser, because identical decision rules are not always optimum \cite{Kantor}. However, for the majority of decision making problems, i.e. for the choice of the probability distributions $P_0$ and $P_1$, identical decision makers are optimum and minimise $P_E$ for some $\hat\gamma$. Similarly, if no assumption is made on the choice of the robustness parameters and the nominal distributions, there are some decision making problems for which $P_E$ is minimised by $\boldsymbol{\hat\phi}$. This result together with Theorem~\ref{theorem1} implies the saddle value condition \eqref{eq13} and thus generalizes \cite{Poor1}, which requires stochastic ordering of random variables $X_i$. Notice that since no other decision rules apart from $\boldsymbol{\hat\phi}$ are able to achieve the saddle value condition defined on $\boldsymbol{U}$, by Theorem~\ref{theorem1} no other decision rules can be minimax robust while minimising $P_E$ either.

\subsection{Uncertainty Classes Based on $\alpha$-divergence}
Similar to the KL-divergence, for the choice of $\alpha-$divergence, $X_i$s are not jointly stochastically bounded, because minimax decision rules are randomised \cite{gul6}. However, a minimax decentralized detection is possible with the same arguments stated in the previous section. The advantage of $\alpha$-divergence over the KL-divergence is that both the distance, namely the parameter $\alpha$, as well as the thresholds of the nominal test $t$ can be chosen arbitrarily for every sensor $i$. This provides flexibility and a more likely scenario that the designed decision rules $\boldsymbol{\phi}$ minimise not only $P_{E_i}$ but also $P_{E}$, hence they also imply the left inequality in \eqref{eq13}. For both schemes, without imposing any additional constraints on the choice of the parameters or the nominal distributions, the right inequality in \eqref{eq13} is always satisfied. Therefore, the power of the test is guaranteed to be above a certain threshold, despite the uncertainty on the sensor network.

\subsection{Composite Uncertainty Classes}
The uncertainty classes for each decision maker can be chosen arbitrarily either from Huber's extended uncertainty classes or from the uncertainty classes formed with respect to the $\alpha-$divergence\footnote{As $\alpha\rightarrow 1$, the $\alpha-$divergence tends to the KL-divergence.}. Based on the information from the previous sections, it can be concluded that the decentralized detection network is minimax robust, if the sensor and the fusion thresholds minimize the overall error probability $P_E$ for the least favorable distributions $\boldsymbol{\hat{Q}_0}$ and $\boldsymbol{\hat{Q}_1}$.

\section{Generalizations}\label{general}
\subsection{Neyman-Pearson Formulation}\label{neymanpearson}
The Neyman-Pearson (NP) version of the same problem can be stated as follows:
\begin{equation}\label{eq19}
\inf_{\boldsymbol\phi\in\boldsymbol\Phi,\gamma}\sup_{\boldsymbol{Q_1}\in\mathscr{P}_1} P_M(\boldsymbol{{Q}_1},{\boldsymbol\phi},{\gamma})\quad \mathrm{s.t.} \,\,\, \sup_{\boldsymbol{Q_0}\in\mathscr{P}_0} P_F(\boldsymbol{Q_0},{\boldsymbol\phi},{\gamma})\leq t.
\end{equation}
If a pair of LFDs $(\boldsymbol{\hat{Q}_0},\boldsymbol{\hat{Q}_1})$ solves the maximisation of the Bayesian version of the minimax optimisation problem \eqref{eq12}, it also solves the maximisation of its NP counterpart \eqref{eq19}, because, the inequalities in \eqref{eq14} imply \eqref{eq19}.\\
For the minimisation, dependently randomised decision and/or fusion rules may need to be employed at sensors, if the distribution of $\hat{l}_i(Y_i)$ has a jump discontinuity under $\mathcal{H}_0$ or $\mathcal{H}_1$, and at the fusion center, cf. \cite{Tsi,Varshney}. While randomisation may be allowed to solve \eqref{eq19} if Huber's uncertainty classes are considered, the same conclusion cannot be made thoroughly when the uncertainty classes are constructed based on the $\alpha-$divergence. In the latter case, dependently randomised decision rules may only be allowed at the fusion center but not at the decision makers, because the robust decision rules are unique and modifying them automatically results in the loss of saddle value inequalities \eqref{eq13}, \cite{gulbook}.

\subsection{Repeated Observations}\label{repeated}
The proposed model includes the case, where one or more decision makers give their decisions based on a block of observations $\mathbf{y}_i=(y_i^1,\ldots,y_i^n)$, which are not necessarily obtained from identically distributed random variables $Y_i^1,\ldots,Y_i^n$. For every decision maker $i$, if the Huber's uncertainty classes are considered, it is known that the multiplication of the robust likelihood ratio functions also satisfies the minimax condition \cite[p. 1756]{hube65}. However, this is not true when the uncertainty classes are constructed for the $\alpha$-divergence distance, cf.~\cite{gul5}. In this case, the minimax tests must be designed over multi-variate distribution functions.

\subsection{Different Network Topologies}\label{diffnet}
Among the network topologies, probably the parallel network topology has received the most attention in the literature \cite{Varshney}. However, depending on the application, decentralized detection networks can be designed considering a number of different topologies, for example a serial topology, a tree topology, or an arbitrary topology \cite{Tsi}. For arbitrary network topologies, it is known that likelihood ratio tests are no longer optimal, in general \cite[p. 331]{Tsi}. Therefore, the results obtained for a parallel network topology cannot be generalized to arbitrary networks in a straightforward manner. Each network structure requires a new and possibly much complicated design. In light of Theorem~\ref{theorem1}, obtaining bounded error probability at the output of the fusion center is easier. Every sensor in the sensor network is required to transmit stochastically ordered decisions to its neighboring sensors and must make sure that the average error probability is less than $1/2$. This guarantees bounded error probability. Minimization of the global error probability $P_E$ can be handled separately.\\
Asymptotically, i.e. when the number of sensors goes to infinity, $P_E$ goes to zero if the network topology is parallel. This is a consequence of Cramer's Theorem~\cite{cramer} for Bernoulli random variables $U_i$. If the network of interest is a tandem network, the error probability is almost surely bounded away from zero if $l_i$ for every sensor $i$ is bounded under each hypothesis $\mathcal{H}_j$ \cite{cover1969, hellman1970}. Remember that Huber's clipped likelihood ratio test bounds the nominal likelihoods, therefore, a minimax robust tandem network can never be asymptotically error free \cite{serial}. On the other hand, the minimax robust test based on the KL-divergence or $\alpha$-divergence does not alter the boundedness properties of $l_i$s, hence, preserves the asymptotic properties of the network.

\section{Conclusions}\label{conclusion}
In this paper, minimax robust decentralized hypothesis testing has been studied for parallel sensor networks. It has been proven that the minimax robust tests designed from the KL-divergence neighborhood do not satisfy the joint stochastic boundedness property. This has motivated an attempt to prove whether minimax robust decentralized detection is possible in this case. The theory has been developed under the assumption that the random variables $Y_i$ corresponding to the observations $y_i$ are independent but not necessarily identical. Additionally, multi-level quantisation at decision makers was also allowed. Three examples of the proposed robust model has been provided. An extension of the proposed model to the Neyman-Pearson test, repeated observations, and different network topologies has been discussed. The proposed model generalizes \cite{Poor1} since stochastic boundedness property is not required at sensors and the sensors decision rules do not have to be monotone in order to achieve minimax robustness. This allows different types of minimax robust tests to be simultaneously employed by the decision makers, not only the clipped likelihood ratio tests.

The open problems arising from this work can be listed as follows:
\begin{itemize}
\item What are the minimax strategies for the sensor networks with arbitrary topologies, for which likelihood ratio test is known not to be optimum?
\item How does the design, i.e. $\varphi_i$ and $\vartheta_i$ should look like when $Y_i$ are not mutually independent in order to guarantee bounded error probability?
\end{itemize}

\bibliographystyle{IEEEtran}
\bibliography{strings2new}
\end{document}